\tikzset{every picture/.style={remember picture}}
\newcommand{\ignore}[1]{}
\newcommand{\qedhere}{\ensuremath{\Box}}
\newcommand{\KP}{KP\xspace}
\newcommand{\SHARPOPTKNAPSACK}{\#KNAPSACK$^*$\xspace}
\newcommand{\SHARPKNAPSACK}{\#KNAPSACK\xspace}
\newcommand{\KPFEASIBLE}{\mathcal{S}\xspace}
\newcommand{\KPOPT}{\mathcal{S}^{*}\xspace}
\title{Exact Counting and Sampling of Optima for the Knapsack Problem}
\author{
% Authors
Jakob Bossek\inst{1}\orcidID{0000-0002-4121-4668} \and
Aneta Neumann\inst{2}\orcidID{0000-0002-0036-4782} \and
Frank Neumann\inst{2}\orcidID{0000-0002-2721-3618}
}
\authorrunning{J. Bossek et al.}
\institute{
% Affiliations
Statistics and Optimization, University of M{\"u}nster, Germany \\
\and
School of Computer Science, The University of Adelaide, Australia \\
\email{bossek@wi.uni-muenster.de, \{aneta,frank\}.neumann@adelaide.edu.au}}
\begin{document}

\maketitle

\begin{abstract}
Computing sets of high quality solutions has gained increasing interest in recent years. In this paper, we investigate how to obtain sets of optimal solutions for the classical knapsack problem. We present an algorithm to count exactly the number of optima to a zero-one knapsack  problem instance. In addition, we show how to efficiently sample uniformly at random from the set of all global optima. In our experimental  study, we investigate how the number of optima develops for classical random benchmark instances dependent on their generator parameters. We find that the number of global optima can increase exponentially for practically relevant classes of instances with correlated weights and profits which poses a justification for the considered exact counting problem.
\end{abstract}

\keywords{Zero-one knapsack problem \and exact counting \and sampling \and dynamic programming}

\section{Introduction}

Classical optimisation problems ask for a single solution that maximises or minimises a given objective function under a given set of constraints. This scenario has been widely studied in the literature and a vast amount of algorithms are available. In the case of NP-hard optimisation problems one is often interested in a good approximation of an optimal solution. Again, the focus here is on a single solution.

Producing a large set of optimal (or high quality) solutions allows a decision maker to pick from structurally different solutions. Such structural differences are not known when computing a single solution. Computing a set of optimal solutions has the advantage that more knowledge on the structure of optimal solutions is obtained and that the best alternative can be picked for implementation. As the number of optimal solutions might be large for a given problem, sampling from the set of optimal solutions provides a way of presenting different alternatives.

Related to the task of computing the set of optimal solutions, is the task of computing diverse sets of solutions for optimisation problems. This area of research has obtained increasing attention in the area of planning where the goal is to produce structurally different high quality plans~\cite{DBLP:conf/ecai/SohrabiRUH16,DBLP:conf/aips/0001SUW18,DBLP:conf/aaai/KatzSU20,DBLP:conf/aaai/0001S20}. Furthermore, different evolutionary diversity optimisation approaches which compute diverse sets of high quality solutions have been introduced~\cite{ulrich2011maximizing,neumann2018discrepancy,neumann2019evolutionary}. For the classical traveling salesperson problem, such an approach evolves a diverse set of tours which are all a good approximation of an optimal solution~\cite{DBLP:conf/gecco/DoBN020}.

Counting problems are frequently studied in the area of theoretical computer science and artificial intelligence~\cite{10.5555/3202481,DBLP:journals/cc/FournierMM15,DBLP:conf/aaai/FichteHM19,DBLP:conf/sat/FichteHMW18}. Here the classical goal is to count the number of solutions that fulfill a given property.
This might include counting the number of optimal solutions. Many counting problems are \#P-complete~\cite{journals/tcs/Valiant79} and often approximations of the number of such solutions, especially approximations on the number of optimal solutions are sought~\cite{DBLP:conf/approx/DyerGGJ00}. Further examples include counting the number of shortest paths between two nodes in graphs~\cite{MihalakMSRW2016ApproxCountingShortestPath} or exact counting of minimum-spanning-trees~\cite{BroderM1997CountingMSTs}.
For the knapsack problem~(KP), the problem of counting the number of feasible solutions, i.e. the number of solutions that do not violate the capacity constraint, is \#P-complete. As a consequence, different counting approaches have been introduced to approximately count the number of feasible solutions~\cite{Dyer2003ApproximateCounting,DBLP:journals/iandc/RizziT19,Stefankovic2012FPTASKnapsackCounting}.

In this paper, we study the classical zero-one knapsack problem (KP). We develop an algorithm that is able to compute all optimal solutions for a given knapsack instance. The algorithm adapts the classical dynamic programming approach for KP in the way that all optimal solutions are produced implicitly. As the number of such solutions might grow exponentially with the problem size for some instances, we develop a sampling approach which samples solutions for the set of optimal solutions uniformly at random.

We carry out experimental investigations for different classes of knapsack instances given in the literature (instances with uniform random weights and instances with different correlation between weights and profits). Using our approach, we show that the number of optimal solutions significantly differs between different knapsack instance classes. In particular, for instances with correlated weights and profits -- a group of great importance in practical applications -- an exponential growth of optima is observed.
In addition, we point out that changing the knapsack capacity slightly can reduce the number of optimal solutions from exponential to just a single solution. 

The paper is structured as follows. In the next section, we introduce the task of computing the set of optimal solutions for the knapsack problem. Afterwards, we present the dynamic programming approach for computing the set of optimal solutions and show how to sample efficiently from the set of optimal solutions without having to construct the whole set of optimal solutions. In our experimental investigations, we show the applicability of our approach to a wide range of knapsack instances and point out insights regarding the number of optimal solutions for these instances. Finally, we finish with some concluding remarks.

\section{Problem Formulation}
\label{sec:sec02}

We now introduce the problem of computing all optimal solutions for the knapsack problem. In the following, we use the standard notation $[n]=\{1,\ldots,n\}$ to express the set of the first $n$ positive integers.
The problem studied is the classical NP-hard \emph{zero-one knapsack problem} (\KP). We are given a knapsack with integer capacity $W>0$ and a finite set of $n$ items, each with positive integer weight $w_i$ and associated integer profit or value $v_i$ for $i \in [n]$. Each subset $s \subset [n]$ is called a \emph{solution/packing}. We write
\begin{align*}
    w(s) = \sum_{i \in s} w_i \text{ and } v(s) = \sum_{i \in s} v_i
\end{align*}
for the total weight and value respectively. A solution is \emph{feasible} if its total weight does not exceed the capacity. Let 
\begin{align*}
    \KPFEASIBLE = \{s \mid s \subset [n] \wedge w(s) \leq W\}
\end{align*}
be the set of feasible solutions. The goal in the optimisation version of the problem is to find a solution $s^{*} \in \KPFEASIBLE$ such that 
\begin{align*}
    s^{*} = \text{arg$\,$max}_{s \in \KPFEASIBLE} v(s).
\end{align*}
Informally, in the optimisation version of the \KP, we strive for a subset of items that maximises the total profit under the constraint that the total weight remains under the given knapsack capacity.

Let $v_{\max}$ be the value of an optimal solution $s^{*}$ and let \begin{align*}
\KPOPT = \left\{s \,\mid\, s \in \KPFEASIBLE \wedge v(s) = v_{\max}\right\}
\end{align*}
be the set of optimal solutions. 
%\frank{I would call the set of optimal solutions OPT instead of S, S is also used for the sampled set of Algorithm 2}
In this work we study a specific counting problem which we refer to as \SHARPOPTKNAPSACK in the following. Here, the goal is to determine \emph{exactly} the cardinality of the set $\KPOPT$. Note that this is a special case of the classic counting version \SHARPKNAPSACK where we aim to count the set of all feasible solutions $\KPFEASIBLE$, and $\KPOPT \subset \KPFEASIBLE$. In addition, we are interested in procedures to sample uniformly at random a subset of $k$ out of $|\KPOPT|$ solutions with $k \leq |\KPOPT|$.

\section{Exact Counting and Sampling of Optima}
\label{sec:sec03}

In this section we introduce the algorithms for the counting and sampling problems stated. We first recap the classic dynamic programming algorithm for the zero-one KP as it forms the foundation for our algorithm(s).

\subsection{Recap: Dynamic Programming for the KP}

Our algorithms are based on the dynamic programming approach for the optimisation version (see, e.g. the book by Kellerer et al.~ \cite{Kellerer2004KnapsackProblems}). This well-known algorithm maintains a table $V$ with components $V(i,w)$ for $0 \leq i \leq n, 0 \leq w \leq W$. Here, component $V(i,w)$ holds the maximum profit that can be achieved with items up to item $i$, i.e., $\{1, \ldots, i\}$, and capacity $w$. The table is constructed bottom-up following the recurrence
\begin{align*}
V(i,w) = \max\bigl\{\underbrace{V(i-1,w)}_{\text{(a) leave item } i}, \underbrace{V(i-1,w-w_i)+v_i}_{\text{(b) take item } i}\bigr\}
\end{align*}
for $1 \leq i \leq n, 0 \leq w \leq W$. Essentially, the optimal value $V(i,w)$ is achieved by making a binary decision for every item $i \in [n]$ relying on pre-calculated optimal solutions to sub-problems with items from $\{1, \ldots, i-1\}$. The options are (a) either leaving item $i$ where the optimal solution is realised by the maximum profit achieved with items $\{1, \ldots, i-1\}$ and capacity $w$. Option (b) deals with putting item $i$ into the knapsack (only possible if $w_i < w$) gaining profit $v_i$ at the cost of additional $w_i$ units of weight. In consequence, the optimal profit $V(i,w)$ is the optimal profit with items from $\{1, \ldots, i-1\}$ and capacity $w-w_i$, i.e., $V(i,w)=V(i-1, w-w_i)+v_i$. Initialization follows
\begin{alignat}{3}
& V(0,w) = 0 && \quad\quad \forall \, &&0 \,\leq w \leq W \label{eq:dp_opt_empty_knapsack} \\
& V(i,w) = -\infty && \quad\quad \forall \, && w < 0 \label{eq:dp_opt_invalid_solution}
\end{alignat}
which covers the base cases of an empty knapsack (Eq.~\eqref{eq:dp_opt_empty_knapsack}) and a negative capacity, i.e., invalid solution (Eq.~\eqref{eq:dp_opt_invalid_solution}), respectively. Eventually, $V(n,W)$ holds the profit of an optimal solution.

\begin{algorithm}[t]
\caption{DP-algorithm for \SHARPOPTKNAPSACK}
\label{alg:dp}
%\begin{algorithmic}[1]
\SetKwInOut{Input}{Input}
\Input{Number of items $n$, capacity $W$}
\For{$w \gets 0$ \KwTo $W$}{
    $V(0, w) \gets 0$\; $C(0,w) \gets 1$\;
} %for
\For{$i \gets 0$ \KwTo $n$}{
    $V(i, 0) \gets 0$\; $C(i, 0) \gets 1$\;
}
\For{$i \gets 1$ \KwTo $n$}{
    \For{$w \gets 1$ \KwTo $W$}{
        \uIf{$w_i > w$}{
            $V(i,w) \gets V(i-1,w)$\;
            $C(i,w) \gets C(i-1,w)$\;
        }\Else{
            \uIf{$V(i-1,w) = V(i-1,w-w_i)+v_i$}{
                $V(i,w) \gets V(i-1,w)$\;
                $C(i,w) \gets C(i-1,w) + C(i-1, w-w_i)$\;
            }\uElseIf{$V(i-1,w) > V(i-1,w-w_i)+v_i$}{
                $V(i,w) \gets V(i-1,w)$\;
                $C(i,w) \gets C(i-1,w)$\;
            }\Else{
                $V(i,w) \gets V(i-1,w-w_i)+v_i$\;
                $C(i,w) \gets C(i-1,w-w_i)$\;
            }%\EndIf
        }%\EndIf
    }%\EndFor
}%\EndFor
\Return{$V$, $C$}
%\end{algorithmic}
\end{algorithm}

\begin{table}[htb]
\caption{Exemplary knapsack instance (left) and the dynamic programming tables $V(i,w)$ (center) and $C(i,w)$ (right) respectively. Table cells highlighted in \colorbox{gray!20}{light-gray} indicate components where two options are possible: either packing item $i$ or not. \colorbox{green!20}{Light-green} cells indicate the total value of any optimal solution for $V(\cdot,\cdot)$ and $|\KPOPT|$ for $C(\cdot, \cdot)$.}
\label{tab:exact_counting_example}
\renewcommand{\tabcolsep}{3.3pt}
\renewcommand{\arraystretch}{1.3}
\centering
\begin{footnotesize}
\begin{tabular}{|r|r|r|}
\multicolumn{3}{l}{\vspace{0.45cm}} \\
\multicolumn{3}{c}{$W=8$} \\
\hline
$i$ & $w_i$ & $v_i$ \\
\hline
1 & 3 & 3 \\
\hline
2 & 8 & 10 \\
\hline
3 & 2 & 3 \\
\hline
4 & 2 & 4 \\
\hline
5 & 2 & 3 \\
\hline
\end{tabular}
\hskip5pt
\begin{tabular}{|c|c|c|c|c|c|c|c|c|c|}
\multicolumn{10}{c}{$V(\cdot, \cdot)$} \\
\hline
\diagbox{$i$}{$w$}
& 0 & 1 & 2 & 3 & 4 & 5 & 6 & 7 & 8 \\
\hline
0 & 0 & 0 & 0 & 0 & 0 & 0 & 0 & 0 & 0 \\
\hline
1 & 0 & 0 & 0 & 3 & 3 & 3 & 3 & 3 & 3 \\
\hline
2 & 0 & \tikzmark{v21}0 & 0 & \tikzmark{v23}3 & 3 & 3 & 3 & 3 & 10 \\
\hline
3 & 0 & 0 & 3 &  \tikzmark{v34}\cellcolor{gray!20}{3} &\cellcolor{gray!20}{3} & 6 & 6 & 6 & 10 \\
\hline
4 & 0 & 0 & 4 & 4 & 7 & 7 & 7 & 10 & \cellcolor{gray!20}{10} \\
\hline
5 & 0 & 0 & 4 & 4 & \cellcolor{gray!20}{7} & \cellcolor{gray!20}{7} & 10 & 10 & \cellcolor{green!20}{10} \\
\hline
\end{tabular}
\hskip5pt
\begin{tabular}{|c|c|c|c|c|c|c|c|c|c|}
\multicolumn{10}{c}{$C(\cdot,\cdot)$} \\
\hline
\diagbox{$i$}{$w$}
& 0 & 1 & 2 & 3 & 4 & 5 & 6 & 7 & 8 \\
\hline
0 & 1 & 1 & 1 & 1 & 1 & 1 & 1 & 1 & 1 \\
\hline
1 & 1 & 1 & 1 & 1 & 1 & 1 & 1 & 1 & 1 \\
\hline
2 & 1 & 1 & 1 & 1 & 1 & 1 & 1 & 1 & 1 \\
\hline
3 & 1 & 1 & 1 & \cellcolor{gray!20}{2} & \cellcolor{gray!20}{2} & 1 & 1 & 1 & 1 \\
\hline
4 & 1 & 1 & 1 & 1 & 1 & 2 & 2 & 1 & \cellcolor{gray!20}{2} \\
\hline
5 & 1 & 1 & 1 & 1 & \cellcolor{gray!20}{2} & \cellcolor{gray!20}{3} & 1 & 3 & \cellcolor{green!20}{4} \\
\hline
\end{tabular}
\end{footnotesize}
\end{table}

\subsection{Dynamic Programming for \SHARPOPTKNAPSACK}

We observe that if $V(i-1,w) = V(i-1,w-w_i)+v_i$ we can achieve \emph{the same maximum profit} $V(i,w)$ by both options (a) or (b). Analogously to $V(i,w)$ let $C(i,w)$ be the number of solutions with maximum profit given items from $\{1, \ldots, i\}$ and capacity $w$. Then there are three update options for $C(i,w)$ for $1 \leq i \leq n$ and $0 \leq w \leq W$ (we discuss the base case later): (a') Either, as stated above, we can obtain the same maximum profit $V(i,w)$ by packing or not packing item $i$. In this case, $C(i,w)=C(i-1,w)+C(i-1,w-w_i)$ since the item sets leading to $V(i-1,w)$ and $V(i-1,w-w_i)$ are necessarily disjoint by construction. Options (b') and (c') correspond to (a) and (b) leading to the recurrence
% \jakob{This cases environment looks shit! Tiny-font looks crap, too.}
% \jakob{Index-notation fits, but will look bad in the text.
% \begin{align}
% C_{i,w} =
% \begin{cases}
% C_{i-1,j}+C_{i-1,j-w_i} & \text{if } V_{i-1,j}=V_{i-1,w-w_i} \\
% C_{i-1,1} & \text{if } V_{i-1,j}>V_{i-1,w-w_i} \\
% C_{i-1,j-w_i} & \text{otherwise.}
% \end{cases}
% \end{align}
% } % jakob

\begin{footnotesize}
\begin{align}
C(i,w) =
\begin{cases}
C(i-1,w)+C(i-1,w-w_i) & \text{ if } V(i-1,w)=V(i-1,w-w_i)+v_i\\
C(i-1,w) & \text{ if } V(i-1,w)>V(i-1,w-w_i)+v_i\\
C(i-1,w-w_i) & \text{ otherwise.}
\end{cases}
\label{eq:dp_count_recurrence}
\end{align}
\end{footnotesize}
Analogously to Equations~\eqref{eq:dp_opt_empty_knapsack} and \eqref{eq:dp_opt_invalid_solution} the bases cases
\begin{align}
C(0,w)=1 & \quad \forall \, 0\leq w \leq W \label{eq:dp_count_empty_knapsack} \\
C(i,w)=0 & \quad \forall \, w < 0 \label{eq:dp_count_invalid_solution}
\end{align}
handle the empty knapsack (the empty-set is a valid solution) and the case of illegal items (no valid solution(s) at all).
The base cases are trivially correct. The correctness of the recurrence follows inductively by the preceding argumentation. Hence, we wrap up the insights in the following lemma whose proof is embodied in the preceding paragraph.
\begin{lemma}
\label{lem:dp_count_algorithm_table_correctness}
Following the recurrence in Eq.~\eqref{eq:dp_count_recurrence} $C(i,w)$ stores the number of optimal solutions using items from $\{1, \ldots, i\}$ given the capacity $w$.
\end{lemma}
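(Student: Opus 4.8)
The plan is to prove the lemma by induction on $i$, mirroring the three branches of the recurrence. Fix $i$ and $w$ and introduce the notation $\mathcal{O}(i,w) = \{s \subseteq \{1,\dots,i\} \mid w(s) \leq w,\ v(s) = V(i,w)\}$ for the set whose cardinality we claim equals $C(i,w)$. Since the correctness of the $V$-recurrence is already known for the optimisation version, I will freely use that $V(i,w)$ is the true maximum attainable profit over packings $s \subseteq \{1,\dots,i\}$ with $w(s) \leq w$ (and $V(i,w)=-\infty$ for $w<0$, i.e.\ no feasible packing).

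For the base case I would treat $i=0$: the empty packing is the unique subset of $\emptyset$, it is feasible for every $w \geq 0$ with value $0 = V(0,w)$, so $|\mathcal{O}(0,w)| = 1 = C(0,w)$ by Eq.~\eqref{eq:dp_count_empty_knapsack}; and for $w<0$ there is no feasible packing at all, matching $C(0,w)=0$ from Eq.~\eqref{eq:dp_count_invalid_solution}. I would also note in passing that the column $w=0$ for arbitrary $i$ is consistent, the empty packing being the unique optimum there.

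For the inductive step, assume the claim holds for $i-1$ and all capacities. Partition $\mathcal{O}(i,w)$ into the optima $\mathcal{O}^{-}$ not containing item $i$ and the optima $\mathcal{O}^{+}$ containing item $i$; this split is trivially disjoint and exhaustive, which is what licenses adding the two counts. The two structural claims to establish are: (i) $\mathcal{O}^{-}$ is non-empty exactly when $V(i-1,w) = V(i,w)$, and then $\mathcal{O}^{-} = \mathcal{O}(i-1,w)$, hence $|\mathcal{O}^{-}| = C(i-1,w)$ by the inductive hypothesis; and (ii) the map $s \mapsto s \setminus \{i\}$ is a bijection from $\mathcal{O}^{+}$ onto $\mathcal{O}(i-1,w-w_i)$ exactly when $V(i-1,w-w_i)+v_i = V(i,w)$ (and $\mathcal{O}^{+}=\emptyset$ otherwise), hence $|\mathcal{O}^{+}| = C(i-1,w-w_i)$. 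Each is a one-line exchange argument: if $s \in \mathcal{O}^{+}$ then $s\setminus\{i\}$ is feasible for capacity $w-w_i$ with value $V(i,w)-v_i$, and were it not of maximum value $V(i-1,w-w_i)$ for that subproblem, swapping in an optimal packing $s'$ for $(i-1,w-w_i)$ would give $s'\cup\{i\}$ feasible for $(i,w)$ with strictly larger value, contradicting $s\in\mathcal{O}(i,w)$; the argument for $\mathcal{O}^{-}$ is symmetric. Combining $|\mathcal{O}(i,w)| = |\mathcal{O}^{-}| + |\mathcal{O}^{+}|$ with the three possible sign relations between $V(i-1,w)$ and $V(i-1,w-w_i)+v_i$ then reproduces precisely the three cases of Eq.~\eqref{eq:dp_count_recurrence}; the algorithm's separate case $w_i>w$ is subsumed, since then $V(i-1,w-w_i)=-\infty$ by Eq.~\eqref{eq:dp_opt_invalid_solution} and $C(i-1,w-w_i)=0$ by Eq.~\eqref{eq:dp_count_invalid_solution}, so it lands in the ``$V(i-1,w)>V(i-1,w-w_i)+v_i$'' branch with $C(i,w)=C(i-1,w)$.

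The main obstacle is step (ii): one must argue that ``optimal for $(i,w)$ and containing item $i$'' forces the remaining items to form an \emph{optimal} packing for the strictly smaller instance $(i-1,w-w_i)$, not merely a feasible one — this is exactly where the definition of $V$ as the genuine optimum is invoked, and it is also the point at which the disjointness of $\mathcal{O}^{-}$ and $\mathcal{O}^{+}$ (needed for additivity) matters. Once the bijection and disjointness are pinned down, the remaining bookkeeping over the three cases is immediate and the induction closes.
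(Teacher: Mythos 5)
Your proof is correct and follows essentially the same route as the paper: induction on $i$, with the three branches of the recurrence matched to whether one or both of the take/leave options attain the maximum $V(i,w)$, and additivity justified by the disjointness of the optima containing item $i$ from those that do not. The paper leaves the inductive step at the level of the informal discussion preceding the lemma, so your explicit partition, bijection and exchange argument is just a more detailed write-up of the same idea.
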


Algorithm~\ref{alg:dp} shows pseudo-code for our algorithm. The algorithm directly translates the discussed recurrences for tables $V$ and $C$. Note that the pseudo-code is not optimised for performance and elegance, but for readability.

\begin{theorem}
\label{thm:dp_count_algorithm}
Let $\KPOPT$ be the set of optimal solutions to a zero-one knapsack problem with $n$ items and capacity $W$. There is a deterministic algorithm that calculates $|\KPOPT|$ exactly with time- and space-complexity $O(n^2W)$.
\end{theorem}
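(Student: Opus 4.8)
The plan is to reduce the correctness part entirely to Lemma~\ref{lem:dp_count_algorithm_table_correctness} and then carry out a short running-time and space analysis of Algorithm~\ref{alg:dp}. The one point that needs care --- and that explains why the bound is $O(n^2W)$ rather than the $O(nW)$ suggested by the number of table cells --- is that the entries of the count table $C$ can grow exponentially, so arithmetic on them is not a constant-time operation.

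Concretely, I would proceed as follows. First, I would check that Algorithm~\ref{alg:dp} faithfully implements the recurrences for $V$ and for $C$ from Eq.~\eqref{eq:dp_count_recurrence}, together with the base cases \eqref{eq:dp_count_empty_knapsack}--\eqref{eq:dp_count_invalid_solution}: the branch ``$w_i > w$'' captures precisely the case in which item $i$ cannot be packed (equivalently $w - w_i < 0$), the remaining three branches mirror the three cases of Eq.~\eqref{eq:dp_count_recurrence}, and the explicit initialisation of the rows and columns with index $0$ realises the base cases. Hence Lemma~\ref{lem:dp_count_algorithm_table_correctness} applies with $i = n$, $w = W$, so the returned value $C(n,W)$ equals $|\KPOPT|$. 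Second, I would bound the magnitudes: since $C(i,w)$ counts a family of subsets of $\{1,\ldots,i\}$, we have $0 \le C(i,w) \le 2^{i} \le 2^{n}$, so every entry of $C$ (and in particular the output) fits in $n+1$ bits. Third, for the time bound: both tables have $(n+1)(W+1) = O(nW)$ entries, each filled once; an entry of $V$ costs $O(1)$ (a constant number of comparisons/additions of profit values, which are polynomially bounded), while an entry of $C$ costs at most one addition of two $O(n)$-bit integers, i.e.\ $O(n)$ time. Summing over all entries yields total time $O(n^2W)$. Fourth, for space: $V$ needs $O(nW)$ machine words and $C$ needs $O(nW)$ integers of $O(n)$ bits each, i.e.\ $O(n^2W)$ in total, which dominates. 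Determinism is immediate from the pseudocode.

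The main obstacle is essentially just this bookkeeping subtlety about number sizes: one has to notice that $|\KPOPT|$ and the intermediate counts may be exponential in $n$, so that each update of $C$ contributes a factor of $n$ for big-integer arithmetic. Everything else is a routine dynamic-programming analysis, and no new ideas beyond Lemma~\ref{lem:dp_count_algorithm_table_correctness} are required.
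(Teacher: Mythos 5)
Your proposal is correct and follows essentially the same route as the paper's proof: correctness is delegated to Lemma~\ref{lem:dp_count_algorithm_table_correctness}, and the $O(n^2W)$ bound comes from observing that entries of $C$ can be as large as $2^n$, so each addition costs $O(n)$ bit operations over the $O(nW)$ cells. Your write-up is if anything slightly more careful than the paper's (explicitly matching the algorithm's branches to the recurrence and accounting for the $O(n)$-bit storage of each $C$ entry in the space bound).
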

\begin{proof}
The correctness follows from Lemma~\ref{lem:dp_count_algorithm_table_correctness}. For the space- and time-complexity note that two tables with dimensions $(n+1) \times (W+1)$ are filled. For Table $V$ the algorithm requires constant time per cell and hence time and space $O(nW)$. Table $C$ stores the number of solutions which can be exponential in the input size $n$ as we shall see later. Therefore, $C(i, j) \leq 2^n$ and $O(\log(2^n)) = O(n)$ bits are necessary to encode these numbers. Thus, the addition in line~15 in Algorithm~\ref{alg:dp} requires time $O(n)$ which results in space and time requirement of $O(n^2W)$.\hfill\qedhere{}
\end{proof}

Note that the complexity reduces to $O(nW)$ if $|S^{*}| = poly(n)$. Furthermore, the calculation of row $i$ only relies on values in row $i-1$. Hence, we can reduce the complexity by a factor of $n$ if only two rows of $V$ and $C$ are stored and the only value of interest is the number of optimal solutions.

%\begin{tiny}
% \begin{align*}
% C(i,w) =
% \begin{cases}
% %C_{i-1,j}+C_{i-1,j-w_i} & \\
% C(i-1,j)+C(i-1,j-w_i) & \text{ if } V(i-1,j)=V(i-1,w-w_i) \\
% C(i-1,1) & \text{ if } V(i-1,j)>V(i-1,w-w_i) \\
% C(i-1,j-w_i) & \text{ otherwise.}
% \end{cases}
% \end{align*}
%\end{tiny}

For illustration we consider a simple knapsack instance with $n=5$ items fully described by the left-most table in Table~\ref{tab:exact_counting_example}.
% \begin{center}
% \begin{tabular}{rrr}
% \toprule
% $i$ & $w_i$ & $p_i$ \\
% \midrule
% 1 & 3 & 3 \\
% 2 & 8 & 10 \\ 
% 3 & 2 & 3 \\
% 4 & 2 & 4 \\
% 5 & 2 & 3 \\
% \bottomrule
% \end{tabular}
% \end{center}
Let $W=8$. In this setting there exist four optima $s_1 = \{2\}$, $s_2 = \{1,4,5\}$, $s_3=\{1,3,4\}$ and $s_4=\{3,4,5\}$ with profit~10 each and thus $|\KPOPT|=4$. The dynamic programming tables are shown in Table~\ref{tab:exact_counting_example} (center and right). For improved visual accessibility we highlight table cells where both options (packing item $i$ or not) are applicable and hence an addition of the number of combinations of sub-problems is performed in $C$ by the algorithm (cf. first case in the recurrence in Eq.~\ref{eq:dp_count_recurrence}).

\subsection{Uniform Sampling of Optimal Solutions}

\begin{algorithm}[ht]
\caption{Uniform Sampling of Optima}
\label{alg:dp_sampler}
%\begin{algorithmic}[1]
\SetKwInOut{Input}{Input}
\Input{DP tables $V$ and $C$ (see Algorithm~\ref{alg:dp}), number of items $n$, capacity $W$, desired number of solutions $k$.
%, replace (boolean indicating whether sampling should be with or without replacement).
}
$S \gets \emptyset$\; %\Comment{Initialize set of solutions}
\While{$k > 0$}{
    $L \gets \emptyset$\; %\Comment{Current solution}
    $i \gets n$\;
    $w \gets W$\;
    \While{$i > 0$ \textbf{and} $w > 0$}{
        \uIf{$w_i \leq w$ $\wedge$ $V(i,w) = V(i-1,w)$ $\wedge$ $V(i,w) = V(i-1,w-w_i)+v_i$}{ 
        %\tcp*[l]{We have a choice}
            $q \gets \frac{C(i-1,w-w_i)}{C(i,w)}$\;%\Comment{Prob. to take item $i$}
            Let $r$ be a random number in $[0,1]$\;
            \If{$r < q$}{
                $L \gets L \cup \{i\}$\;
                % \If{replace is not true}{
                %     $C(i, w) \gets C(i,w)-1$\;
                % }%\EndIf
                $w \gets w - w_i$\;
            }%\EndIf
        }\uElseIf{$V(i,w) > V(i-1,w)$}{ %\Comment{No choice. We need to take item $i$.}
            $L \gets L \cup \{i\}$\;
            % \If{replace is not true}{
            %     $C(i, w) \gets C(i,w)-1$\;
            % }%\EndIf
            $w \gets w - w_i$\;
        }%\EndIf
        $i \gets i - 1$\; %\Comment{We are done with item $i$}
    }%\EndWhile
    $S \gets S \cup \{L\}$\;
    $k \gets k - 1$\;
}%\EndWhile
\Return{S}
%\end{algorithmic}
\end{algorithm}

The DP algorithm introduced before allows to count $|\KPOPT|$ and hence to solve \SHARPOPTKNAPSACK exactly. The next logical step is to think about a sampler, i.e., an algorithm that samples uniformly at random from $\KPOPT$ even if $\KPOPT$ is exponential in size. In fact, we can utilize the tables $C$ and $V$ for this purpose as they implicitly encode the information on all optima. A similar approach was used by Dyer~\cite{Dyer2003ApproximateCounting} to approximately sample from the (approximate) set of feasible solutions in his dynamic programming approach for \SHARPKNAPSACK. Our sampling algorithm however is slightly more involved due to a necessary case distinctions.

The algorithm starts at $V(n,W)$ respectively and reconstructs a solution $L \subset [n]$, initialized to $L=\emptyset$, bottom-up by making occasional random decisions. Assume the algorithm is at position $(i,w), 1 \leq i \leq n, 1 \leq w \leq W$. Recall (cf. Algorithm~\ref{alg:dp}) that if $V(i-1,w)<V(i-1,w-w_i)+v_i$ we have no choice and we need to put item $i$ into the knapsack. Likewise, if $V(i-1,w)>V(i-1,w-w_i)+v_i$, item $i$ is guaranteed not to be part of the solution under reconstruction. Thus, in both cases, the decision is deterministic. If $V(i-1,w)$ equals $V(i-1,w-w_i)+v_i$, there are two options how to proceed: in this case with probability $$\frac{C(i-1,w-w_i)}{C(i,w)}$$ item $i$ is added to $L$ and with the converse probability $$\frac{C(i-1,w)}{C(i,w)} = 1 -\frac{C(i-1,w-w_i)}{C(i,w)}$$ item $i$ is ignored. If $i$ was packed, the algorithm proceeds (recursively) from $V(i-1,w-w_i)$ and from $V(i-1,w)$ otherwise. This process is iterated while $i>0$ and $w>0$. To sample $k$ solutions we may repeat the procedure $k$ times which results in a runtime of $O(kn)$. This is polynomial as long as $k$ is polynomially bounded and of benefit for sampling from an exponential-sized set $\KPOPT$ if $W$ is low and hence Algorithm~\ref{alg:dp} runs in polynomial time. 
%Uniform sampling without replacement can be realised by decreasing $C(i,w)$ by one for each $(i,w)$-pair that is traversed during the reconstruction.
Detailed pseudo-code of the procedure is given in Algorithm~\ref{alg:dp_sampler}.

We now show that Algorithm~\ref{alg:dp_sampler} in fact samples each optimal solution uniformly at random from $\KPOPT$.

\begin{theorem}
Let $\KPOPT$ be the set of optimal solutions for the knapsack problem and $s \in \KPOPT$ be an arbitrary optimal solution. Then the probability of sampling $s$ using the sampling approach is $1/|\KPOPT|$.
\end{theorem}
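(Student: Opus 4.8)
The plan is to follow the unique sequence of decisions that Algorithm~\ref{alg:dp_sampler} must make in order to output $s$, and to show that the product of the probabilities of these decisions telescopes to $1/|\KPOPT|$. Fix an arbitrary $s \in \KPOPT$ and define residual capacities $w^{(n)} = W$ and, for $i = n, n-1, \ldots, 1$, set $w^{(i-1)} = w^{(i)} - w_i$ if $i \in s$ and $w^{(i-1)} = w^{(i)}$ otherwise. Since $s$ is feasible, $w\bigl(s \cap \{1,\ldots,i\}\bigr) = w(s) - \sum_{j>i,\, j \in s} w_j \le W - \sum_{j>i,\, j\in s} w_j = w^{(i)}$; in particular every $w^{(i)} \ge 0$, and $(i, w^{(i)})$ is exactly the state $(i,w)$ that the inner loop reaches after having agreed with $s$ on all items $n, n-1, \ldots, i+1$.

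The key step is the following invariant, which I would prove by downward induction on $i$ starting from $i = n$: the prefix $s \cap \{1,\ldots,i\}$ is an optimal packing of the subinstance on items $\{1,\ldots,i\}$ with capacity $w^{(i)}$, i.e. $v\bigl(s\cap\{1,\ldots,i\}\bigr) = V(i, w^{(i)})$. The base case is immediate from $s \in \KPOPT$ and $V(n,W) = v_{\max}$. For the step, if some $t \subset \{1,\ldots,i-1\}$ had $w(t) \le w^{(i-1)}$ and $v(t) > v\bigl(s\cap\{1,\ldots,i-1\}\bigr)$, then $t \cup (s \cap \{i\})$ would be a packing over $\{1,\ldots,i\}$ of weight at most $w^{(i)}$ and value strictly larger than $v\bigl(s\cap\{1,\ldots,i\}\bigr) = V(i,w^{(i)})$, a contradiction; hence $v\bigl(s\cap\{1,\ldots,i-1\}\bigr) = V(i-1,w^{(i-1)})$.

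With the invariant in hand I would analyse a single iteration of the inner loop in state $(i, w^{(i)})$ and show that the decision made by the algorithm agrees with $s$ (that is, $i$ is added to $L$ iff $i \in s$) with probability exactly $C(i-1, w^{(i-1)}) / C(i, w^{(i)})$. A short case analysis on whether $i \in s$ and on which of the three branches of Algorithm~\ref{alg:dp_sampler} is executed does this: in the randomised branch (where $w_i \le w^{(i)}$ and $V(i-1,w^{(i)}) = V(i-1,w^{(i)}-w_i)+v_i$) the probability of agreement is $q = C(i-1,w^{(i)}-w_i)/C(i,w^{(i)})$ when $i \in s$ and $1-q = C(i-1,w^{(i)})/C(i,w^{(i)})$ when $i \notin s$, both equal to $C(i-1,w^{(i-1)})/C(i,w^{(i)})$; in the two deterministic branches the forced decision is consistent with $s$ by the invariant (this is where one uses $v(s\cap\{1,\ldots,i\}) = V(i,w^{(i)})$ to rule out, e.g., $i \notin s$ when $V(i,w^{(i)}) > V(i-1,w^{(i)})$, and $i \in s$ when the loop performs the implicit ``leave''), the probability is $1$, and the matching case of the recurrence in Eq.~\eqref{eq:dp_count_recurrence} gives $C(i,w^{(i)}) = C(i-1,w^{(i-1)})$, so the ratio is $1$ as well.

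Finally, because the state $(i,w)$ after any prefix of decisions is determined by those decisions, the algorithm outputs $s$ iff it agrees with $s$ at every iteration, so by the chain rule the probability of this event is the product of the per-iteration probabilities above:
\[
\prod_{i=i_0+1}^{n} \frac{C(i-1, w^{(i-1)})}{C(i, w^{(i)})} \;=\; \frac{C\bigl(i_0, w^{(i_0)}\bigr)}{C(n, W)} ,
\]
where $i_0$ is the loop variable's value on exit, so that either $i_0 = 0$ or $w^{(i_0)} = 0$; in the latter case feasibility and $w^{(i_0)} \ge w\bigl(s\cap\{1,\ldots,i_0\}\bigr)$ force $s\cap\{1,\ldots,i_0\} = \emptyset$, so the returned set is indeed $s$. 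The base cases $C(0,\cdot) = 1$ (Eq.~\eqref{eq:dp_count_empty_knapsack}) and $C(\cdot,0) = 1$ make the numerator $1$, and Lemma~\ref{lem:dp_count_algorithm_table_correctness} gives $C(n,W) = |\KPOPT|$, so the probability equals $1/|\KPOPT|$. I expect the main obstacle to be the exhaustive per-iteration case analysis together with the bookkeeping around loop termination --- verifying that every branch of Algorithm~\ref{alg:dp_sampler} yields the claimed ratio, that the forced decisions are forced to agree with $s$, and that the early exit at $w=0$ leaves $L$ equal to $s$ --- rather than any single deep argument.
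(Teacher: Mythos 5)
Your proof is correct and follows essentially the same route as the paper's: both arguments track the reconstruction of $s$ through the DP table and observe that the product of per-step probabilities $C(i-1,\cdot)/C(i,\cdot)$ telescopes to $1/C(n,W)=1/|\KPOPT|$. Your version is more rigorous where the paper is terse — the prefix-optimality invariant, the case analysis showing the deterministic branches are forced to agree with $s$ with ratio $1$, and the handling of early termination at $w=0$ are exactly the details the paper compresses into ``by construction of $C$'' and its observations (1)--(3).
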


\begin{proof}
Let $s \in \KPOPT$ be an optimal solution. Note that after running Algorithm~\ref{alg:dp} we have $C(n,W)=|\KPOPT|$. For convenience we assume that $C(i, w)=1$ for $w < 0$ to capture the case of invalid solutions. Consider the sequence of $1 \leq r \leq n$ decisions made while traversing back from position $(n,W)$ until a termination criterion is met (either $i \leq 0$ or $w \leq 0$) in Algorithm~\ref{alg:dp_sampler}. Let $q_i = \frac{a_i}{b_i}, i \in [r]$ be the decision probabilities in iterations $i \in [r]$. Here, $q_i$ corresponds to $q$ in line~8 of Algorithm~\ref{alg:dp_sampler} if there is a choice whether the corresponding item is taken or not. If there is no choice we can set $q_i = 1 = \frac{x}{x}$ with $x=C(i-1,w)$ if the item is not packed and $x=C(i-1,w-w_i)$ if the item is packed. A key observation is that (1) $b_1=C(n,W)$, (2) $b_{i}=a_{i-1}$ holds for $i = 2, \ldots, r$ by construction of $C$ and (3) $a_r=1$ since the termination condition applies after $r$ iterations (see base cases for $C$ in Eq.~\ref{eq:dp_count_empty_knapsack} and Eq.~\ref{eq:dp_count_invalid_solution}). Hence, the probability to obtain $s$ is
\begin{align*}
    \prod_{i=1}^{r} q_i
    & = \frac{a_1}{b_1} \cdot \frac{a_2}{b_2} \cdot \ldots \cdot \frac{a_{r-1}}{b_{r-1}} \cdot \frac{a_r}{b_r} \\
    & = \frac{a_1}{b_1} \cdot \frac{a_2}{a_1} \cdot \ldots \cdot \frac{a_{r-1}}{a_{r-2}} \cdot \frac{1}{a_{r-1}} \\
    & = \frac{1}{b_1} \\
    & = \frac{1}{C(n, W)} \\
    & = \frac{1}{|\KPOPT|}.
\end{align*}\hfill\qedhere{}
\end{proof}

\begin{theorem}
\label{thm:dp_sampler_samples_uniformly}
Let $\KPOPT$ be the set of optimal solutions for the knapsack problem with $n$ items. Algorithm~\ref{alg:dp_sampler} samples $k$ uniform samples of $\KPOPT$ in time $O(kn)$.
\end{theorem}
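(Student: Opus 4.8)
The plan is to treat this as a short corollary of the preceding theorem together with an inspection of the loop bounds of Algorithm~\ref{alg:dp_sampler}; there is essentially no new mathematical content here, only bookkeeping.

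First I would establish uniformity. Each pass of the outer \texttt{while} loop (lines 2--19) builds one solution $L$ by exactly the backward-traversal procedure from position $(n,W)$ that was analysed in the previous theorem, drawing a fresh, independent random number $r$ at every branching point. Hence, by that theorem, for every fixed $s \in \KPOPT$ the solution produced in a given pass equals $s$ with probability $1/|\KPOPT|$, i.e.\ it is a uniform sample of $\KPOPT$. Since the $k$ passes use pairwise disjoint randomness, the resulting $k$ samples are mutually independent uniform draws, which is what the statement asks for. (Nothing in the outer loop other than the counter $k$ and the accumulator $S$ is touched between passes, so the per-pass distribution is unaffected.)

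Next I would bound the running time. The outer loop runs exactly $k$ times, since $k$ is decremented by one in each pass (line 18) and the loop guard is $k>0$. Inside a pass, the inner loop (lines 6--17) begins with $i=n$ and decreases $i$ by one in every iteration (line 17), terminating as soon as $i\le 0$ (or $w\le 0$); therefore it executes at most $n$ times. Each inner iteration performs only a constant number of look-ups in $V$ and $C$, comparisons, one division, one random draw, and one set update, i.e.\ $O(1)$ work in the unit-cost model consistent with the discussion preceding the theorem. Multiplying the two bounds yields the claimed $O(kn)$ total time.

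The only point that deserves a moment's care is the uniformity claim, and it is already discharged by the previous theorem: one just has to observe that an individual sample's distribution does not depend on the surrounding loop and that the randomness across passes is independent. Beyond this observation I expect no obstacle.
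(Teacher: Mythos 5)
Your proposal is correct and follows essentially the same route as the paper: uniformity of each sample is deferred to the preceding theorem, and the $O(kn)$ bound comes from counting the $k$ outer passes times at most $n$ inner iterations of constant-cost work. Your added remarks on the independence of the $k$ draws and the unit-cost assumption for the arithmetic on the entries of $C$ are slightly more careful than the paper's own (very terse) argument, but do not constitute a different approach.
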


\begin{proof}
The probabilistic statement follows Theorem~\ref{thm:dp_sampler_samples_uniformly}. For the running time we note that each iteration takes at most $n$ iterations each with a constant number of operations being performed. This is repeated $k$ times which results in $O(kn)$ runtime which completes the proof.\hfill\qedhere{}
\end{proof}

\section{Experiments}
\label{sec:sec04}

In this section we complement the preceding sections with an experimental study and some derived theoretical insights. We first detail the experimental setup, continue with the analysis and close with some remarks.

\subsection{Experimental Setup}
The main research question is to get an impression and ideally to understand how the number of global optima develops for classical KP benchmark instances dependent on their generator-parameters. To this end we consider classical random KP benchmark generators as studied, e.g., by Pisinger in his seminal paper on hard knapsack instances~\cite{Pisinger2005HardKnapsackProblems}. 
\begin{figure*}[t]
    \centering
    \includegraphics[width=1.0\textwidth]{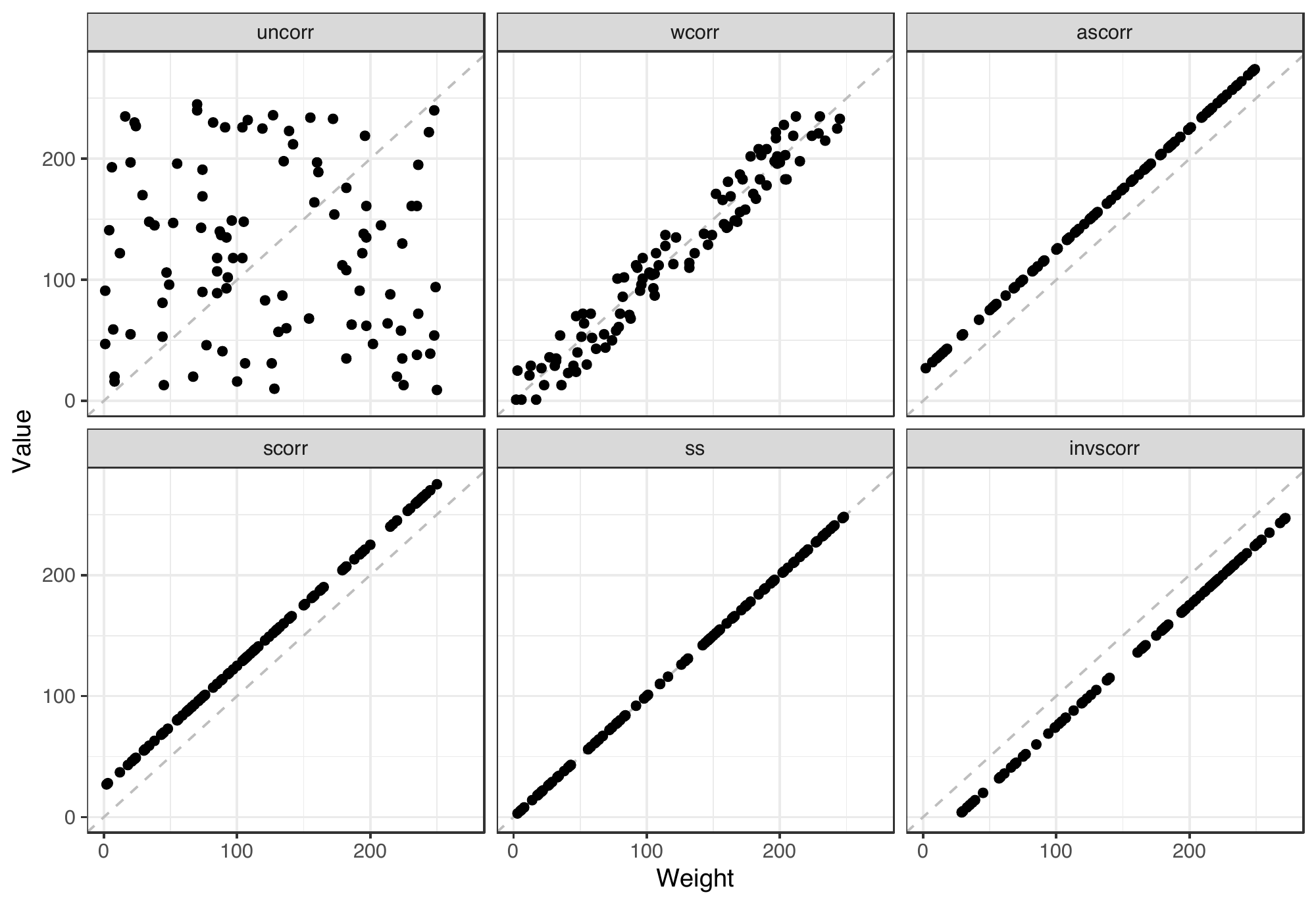}
    \caption{Showcase of considered instance groups. We show each one instance with $n=100$ items, $L=1$ and $R=250$. The gray dashed diagonal lines serves to aid recognising the differences between correlated instance groups.}
    \label{fig:instance_showcase}
\end{figure*}
All considered instance groups are generated randomly with item weights $w_i$ sampled uniformly at random within the data range $\{L, \ldots, R\}$ with $L=1$ and varying $R$ for $i \in [n]$; in any case $L < R$. Item profits $v_i$ are in most cases based on a mapping of item weights. The reader may want to take a look at  Figure~\ref{fig:instance_showcase} alongside the following description for visual aid.

\begin{description}
\item[Uncorrelated (uncorr)] Here, both weights and profits are sampled uniformly at random from $\{L, \ldots, R\}$.
\item[Weakly correlated (wcorr)] Weights $w_i$ are distributed in $\{L, \ldots, R\}$ and profits $v_i$ are sampled from $[w_i - R/10, w_i + R/10]$ ensuring $v_i \geq 1$. Here, the values are typically only a small percentage off the weights.
\item[Almost strongly correlated (ascorr)] 
Weights are distributed in $\{L, \ldots, R\}$ and $v_i$ are sampled from $[w_j + R/10 - R/500, w_j + R/10 + R/500]$.
\item[Strongly correlated (scorr)] Weights $w_i$ are uniformly distributed from the set $\{L, \ldots, R\}$ while profits are corresponding to $w_i + R/10$. Here, for all items the profit equals the positive constant plus some fixed additive constant.
\item[Subset sum (susu)] In this instance group we have $w_i = v_i \, \forall i \in [n]$, i.e., the profit equals the weight. This corresponds to strong correlation with additive constant of zero.
\item[Inversely strongly correlated (invscorr)] Here, first the profits $v_i$ are sampled from $\{L, \ldots, R\}$ and subsequently weights we set $w_i=v_i + R/10$. This is the counterpart of strongly correlated instances.
\end{description}
Correlated instances may seem highly artificial on first sight, but they are of high practical relevance. In economics this situation arises typically if the profit of an investment is directly proportional to the investment plus some fixed charge (strongly correlated) with some problem-dependent noise (weakly / almost strongly correlated). 

In our experiments we vary the instance group, the number of items $n \in \{50, 100, \ldots, 500\}$ and the upper bound $R \in \{25, 50, 100, 500\}$. In addition, we study different knapsack capacities by setting $D=11$ and
$$
W = \left\lfloor \frac{d}{D+1} \sum_{i=1}^{n} w_i\right\rfloor
$$
for $d = 1, \ldots, D$~\cite{Pisinger1999CoreProblemsKnapsack,Pisinger2005HardKnapsackProblems}. Intuitively -- for most considered instance groups -- the number of optima is expected to decrease on average for very low and very high capacities as the number of feasible/optimal combinations is likely to decrease.
For each combination of these parameters, we construct 25 random instances and run the DP algorithm to count the number of optima.

Python~3 implementations of the algorithms and generators and the code for running the experiments can be downloaded from a public GitHub repository.\footnote{\url{https://github.com/jakobbossek/LION2021-knapsack-exact-counting}} The experiments were conducted on a MacBook Pro 2018 with a 2,3 GHz Quad-Core Intel Core i5 processor and 16GB RAM. The operating system was macOS Catalina 10.15.6 and python v3.7.5 was used. Random numbers were generated with the built-in python module \texttt{random} while \texttt{joblist} v0.16.0 served as a multi-core parallelisation backend.

\subsection{Insights into the Number of Optima}

\begin{figure*}[htbp]
    \centering
    \includegraphics[width=\textwidth]{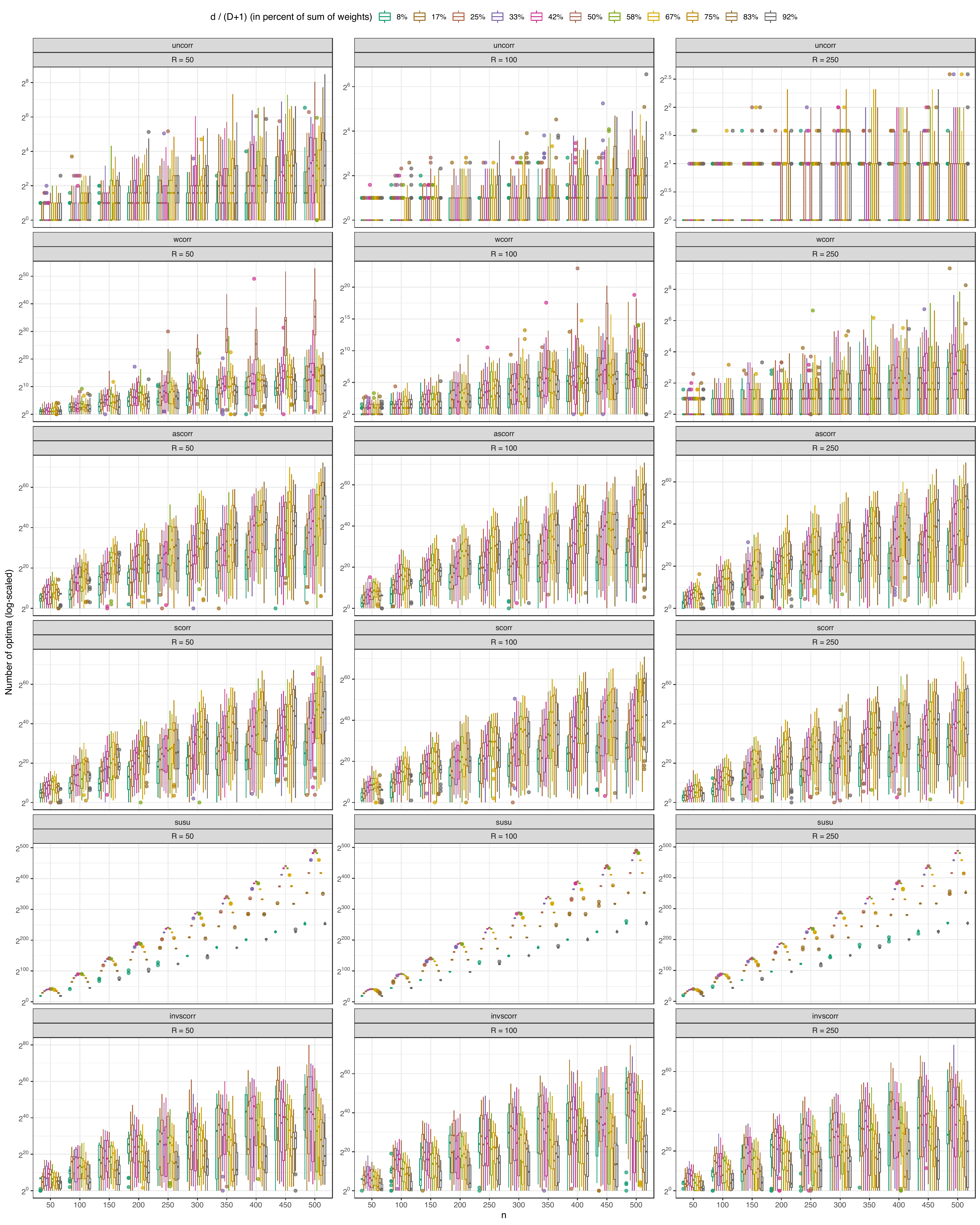}
    \caption{Boxplots of the number of optima as a function of the number of items $n$. The data is split by instance group (rows) and the upper bound $R$ (columns). Different colors indicate the knapsack capacity (shown as percentage of the sum of all items weights).}
    \label{fig:number_of_optima_distribution}
\end{figure*}

Figure~\ref{fig:number_of_optima_distribution} depicts the distribution of the number of optima for each combination considered in the experiments via boxplots. The data is split row-wise by instance group and col-wise by $R$ (recall that $L=1$ in any case). Different box-colors indicate the knapsack capacity which -- for sake of interpretability -- is given in percentage of the total weight, i.e., $\lfloor 100 \cdot (\frac{d}{D+1})\rfloor$.
We observe different patterns dependent on the instance group.
For uncorrelated instances (uncorr), we observe only few optima with single outliers reaching $2^8 = 256$ for $R=50$ and large $n$. Median values are consistently below $2^4=16$. In line with expectation the numbers are highest for relatively small $R$ and high $n$. In fact, the ratio $$H=\frac{n}{R}$$ is a good indicator. It is the expected number of elements with weight $w=1,\ldots,R$. In consequence, $H>1$ and especially $H\gg1$ indicates many elements with the same weight. In contrast, $H<1$ indicates that on average there will be at most one element of weight $w=1,\ldots,R$.
For all correlated instances, i.e., scorr, ascorr, wcorr, invscorr and susu, we observe a very different pattern. Here, the number of optima grows exponentially with growing $n$ given a fixed upper bound $R$. Even if $H$ is low there is huge number of optima. By far the highest count of optima can be observed for subset sum (susu) instances where even peaks with up to $\approx 3\%$ of all $2^n$ solutions are optimal. Here, the boxplots look degenerate, because the variance is very low. Recall that for this type of instance we have $w_i=v_i \, \forall i \in [n]$ and thus for each solution $s$ the equality $w(s)=v(s)$ holds. In consequence we aim to maximally exploit the knapsack capacity. 

To get a better understanding we consider a subset-sum type knapsack instance with $w_i \in \{1, \ldots, R\}, w_i=v_i, \forall i \in [n]$. Assume for ease of calculations that $n$ is a multiple of $R$ and there are exactly $(n/R)$ items of each weight $w \in \{1, \ldots, R\}$, i.e., $|\{i \in [n] \,|\, w_i = w\}| = n/R$. Note that this corresponds to the expected number of $w$-weights if $n$ such weights are sampled uniformly at random from $\{1, \ldots, R\}$. Consider $W = \frac{1}{2} \sum_{i=1}^{n} w_i$. Recall that given this instance, one way we can build an optimum $s \subset [n]$ with $w(s)=v(s)=W$ is by choosing each $\left(\frac{1}{2}\right) \cdot \left(\frac{n}{R}\right)$ items from each weight class, i.e., half of these items (note that there are many more combinations leading to profit $W$). With this we get
\begin{align*}
|\KPOPT| 
\geq \binom{\frac{n}{R}}{\frac{n}{2R}}^R 
\geq \left(\left(\frac{n}{R} \cdot \frac{2R}{n}\right)^{\frac{n}{2R}}\right)^R
= 2^{\frac{n}{2}}.
\end{align*}
Here we used to the well-known lower bound $\binom{n}{k} \geq \left(\frac{n}{k}\right)^k$ for the binomial coefficient. This simple bound establishes that we can expect at least $2^{n/2}$ optima for subset-sum instances if the capacity is set accordingly.

With respect to the knapsack capacity Figure~\ref{fig:number_of_optima_distribution} also reveals different patterns. For inverse strongly correlated instances we observe a decreasing trend with increasing capacity. The vice versa holds for weakly, almost strongly and strongly correlated instances. This is in line with intuition as the size of the feasible search space also grows significantly.

However, note that in general the knapsack capacity can have a massive effect on the number of optima. 

\begin{theorem}
\label{thm:existence_instance_sensitive_capacity}
For every even $n$ there exist a KP instance and a weight capacity $W$ such that $|\KPOPT|$ is exponential, but $|\KPOPT|=1$ for $W'=W+1$.
\end{theorem}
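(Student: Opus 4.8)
The plan is to exhibit an explicit one-parameter family of instances in which a single carefully chosen item plays the role of a ``switch''. Fix an even $n \ge 4$ and set $W = n/2 - 1$, which is a positive integer. The instance will consist of $n-1$ \emph{light} items, each of weight $1$ and profit $1$, together with one \emph{heavy} item $g$ of weight $W+1$ and profit $W+2$. All weights and profits are then positive integers and $W>0$, as required by the problem definition.

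First I would analyse capacity $W$. Here $w_g = W+1 > W$, so the heavy item is infeasible and every feasible packing consists of light items only. A packing of $t$ light items has weight $t$ and profit $t$, so it is feasible iff $t \le W$ and optimal iff $t = W$; hence $v_{\max}=W$ and $\KPOPT$ is precisely the collection of $W$-element subsets of the $n-1$ light items. Thus $|\KPOPT| = \binom{n-1}{W} = \binom{n-1}{\lfloor (n-1)/2\rfloor}$, the largest binomial coefficient in row $n-1$, which by the usual pigeonhole bound $\binom{N}{\lfloor N/2\rfloor} \ge 2^{N}/(N+1)$ is at least $2^{n-1}/n$, i.e.\ exponential in $n$.

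Next I would analyse capacity $W' = W+1$. Now $g$ becomes feasible and $\{g\}$ already fills the knapsack exactly, yielding profit $W+2$. Every all-light packing has profit at most $W+1 < W+2$, and $g$ cannot be combined with any further item since $w_g + 1 = W+2 > W'$. Hence the unique optimum is $\{g\}$ and $|\KPOPT| = 1$. This settles all even $n \ge 4$; the case $n=2$ can be disposed of directly (e.g.\ two items of weight and profit $1$ give $|\KPOPT|=2=2^{n/2}$ at $W=1$ and $|\KPOPT|=1$ at $W'=2$), or one simply reads ``exponential'' in the asymptotic sense.

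The construction itself is the whole content, so there is no deep obstacle; the step that needs the most care is verifying the \emph{simultaneity and uniqueness} at the flip: the heavy item must (i) be too heavy to participate at capacity $W$, so that it does not already interfere with the exponentially many light-item optima, and (ii) at capacity $W+1$ strictly dominate every all-light packing \emph{and} admit no feasible companion, so that the previous exponential family collapses to a single optimum. Checking the two profit comparisons ($W$ vs.\ the forbidden $W+2$, and $W+1$ vs.\ $W+2$) and the integrality/positivity conditions is the only thing to get right.
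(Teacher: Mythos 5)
Your construction is essentially the paper's: $n-1$ unit-weight, unit-profit items plus one heavy item that just misses fitting at capacity $W$ but fits exactly (and strictly dominates, with no room for a companion) at $W+1$, so that the optima flip from all maximal light subsets to the singleton heavy item. The only differences are an immaterial shift of $W$ by one and your use of the cleaner bound $\binom{N}{\lfloor N/2\rfloor}\geq 2^{N}/(N+1)$ in place of the paper's Stirling estimate; both are correct.
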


\begin{proof}
Consider an instance with $n$ items ($n$ even), where $w_i = v_i = 1$ for $i \in [n-1]$ and $w_n = \frac{n}{2}+1$. Let $v_n > \frac{n}{2}+1$. Now consider the knapsack capacity $W = \frac{n}{2}$. Then every subset of $\frac{n}{2}$ items from the first $n-1$ items is  optimal with total weight $W$ and total value $W$ while the $n$-th item does not fit into the knapsack. There are at least
\begin{align*}
    \binom{n-1}{\frac{n}{2}} 
    & = \frac{(n-1)!}{(n/2)!(n-1-n/2)!} \\
    & \geq \frac{(n-1)!}{\left(\left(\frac{n}{2}\right)!\right)^2} \\
    & \geq \frac{\sqrt{2\pi} \sqrt{n-1} (n-1)^{n-1} 2^n e^n}{2 e^{n-1} 4 (2\pi) n^{n+1}} \\
    & = \frac{2^{n-3}e}{\sqrt{2\pi}} \cdot \underbrace{\left(\frac{n-1}{n}\right)^{n-1}}_{\geq e^{-1}} \cdot \frac{\sqrt{n}}{n^2} \cdot \underbrace{\sqrt{1-\frac{1}{n}}}_{\geq 1/\sqrt{2}} \\
    & \geq \frac{1}{\sqrt{\pi}} \cdot \frac{2^{n-4}}{n^{3/2}} \\ 
    & = \Omega(2^{n-4}/n^{3/2})
\end{align*}
optima in this case. Here we basically used Stirling's formula to lower/upper bound the factorial expressions to obtain an exponential lower bound.
Now instead consider the capacity $W'=W+1$. The $n$-th item now  fits into the knapsack which results in a unique optimum with weight $W'=\frac{n}{2}+1$ and value $v_n > \frac{n}{2}+1$ which cannot be achieved by any subset of light-weight items.\hfill\qedhere{}
\end{proof}

\subsection{Closing Remarks}

Knapsack instances with correlations between weights and profits are of high practical interest as they arise in many fixed charge problems, e.g., investment planning. In this type of instances item profits correspond to their weight plus/minus a fixed or random constant. Our experimental study suggests an exponential increase in the number of global optima for such instances which justifies the study and relevance of the considered counting and sampling problem.

\section{Conclusion}
\label{sec:sec05}

We considered the problem of counting exactly the number of optimal solutions of the zero-one knapsack problem. We build upon the classic dynamic programming algorithm for the optimisation version. Our modifications allow to solve the counting problem in pseudo-polynomial runtime complexity. Furthermore, we show how to sample uniformly at random from the set of optimal solutions without explicit construction of the whole set. Computational experiments and derived theoretical insights reveal that for variants of problem instances with correlated weights and profits (a group which is highly relevant in real-world scenarios) and for a wide range of problem generator parameters, the number of optimal solutions can grow exponentially with the number of items. These observations support the relevance of the considered counting and sampling problems.

Future work will focus on (approximate and exact) counting/sampling of high-quality knapsack solutions which all fulfill a given non-optimal quality threshold. In addition, in particular if the set of optima has exponential size, it is desirable to provide the decision maker with a diverse set of high-quality solutions. Even though the introduced sampling is likely to produce duplicate-free samples if the number of solutions is exponential, it seems more promising to bias the sampling process towards a diverse subset of optima, e.g., with respect to item-overlap or entropy. This opens a whole new avenue for upcoming investigations.

\bibliographystyle{splncs04}
\bibliography{lion.bib}

\end{document}